\newcommand{\TeXmacs}{T\kern-.1667em\lower.5ex\hbox{E}\kern-.125emX\kern-.1em\lower.5ex\hbox{\textsc{m\kern-.05ema\kern-.125emc\kern-.05ems}}}
\newcommand{\assign}{:=}
\newcommand{\tmop}[1]{\ensuremath{\operatorname{#1}}}
\theoremstyle{plain}
\newtheorem{theorem}{Theorem}[section]
\newtheorem{corollary}[theorem]{Corollary}
\newtheorem*{theorem*}{Theorem}
\newtheorem*{lemma*}{Lemma}
\newtheorem*{prop*}{Proposition}
\newtheorem*{cor*}{Corollary}
\newtheorem*{conj*}{Conjecture}
\theoremstyle{definition}
\newtheorem*{definition*}{Definition}
\newtheorem{definition}[theorem]{Definition}
\theoremstyle{remark}
\newtheorem*{rem*}{Remark}
\newtheorem*{example*}{Example}
\numberwithin{theorem}{section}
\numberwithin{equation}{section} 
\numberwithin{figure}{section} 
\newcommand{\R}{\ensuremath{\mathbb{R}}}
\newcommand{\e}{\mathrm{e}}
\begin{document}

\title[Data-Driven Estimation of Real Cohomology]{Noncommutative Model Selection and the Data-Driven
Estimation of Real Cohomology Groups}

\thanks{$^{\ast}$Corresponding author}
\thanks{Araceli Guzmán-Tristán was supported by the CONAHCYT program "Estancias Posdoctorales por México para la Formación y Consolidación de las y los Investigadores por México". Antonio Rieser was supported by the US National
Science Foundation under grants No. DMS-1929284 and DMS-1928930, the first while in residence at the Institute for Computational and Experimental Research in Mathematics in Providence, RI, during the "Math + Neuroscience: Strengthening the Interplay Between Theory and Mathematics" program, and the second while in residence
at the Simons-Laufer Mathematical Sciences Research Institute in the spring of 2024 and in a program supported by the Mathematical Sciences Research Institute in the summer of 2022,
held in partnership with the the Universidad
Nacional Aut{\'o}noma de M{\'e}xico. Antonio Rieser was also supported by the
CONAHCYT Investigadoras y Investigadores por M{\'e}xico Project \#1076 and by
the grant N62909-19-1-2134 from the US Office of Naval Research Global and the
Southern Office of Aerospace Research and Development of the US Air Force
Office of Scientific Research. Eduardo Velázquez-Richards was supported by the UNAM Posdoctoral Program (POSDOC) while contributing to this work.}

\author[A. Guzm\'an-Trist\'an]{Araceli Guzm{\'a}n-Trist{\'a}n}

\author[A. Rieser]{Antonio Rieser$^{\ast}$}
\address{CIMAT Guanajuato, Calle Jalisco S/N, Colonia Valenciana,
Guanajuato, GTO, C.P. 36023, México}
\email{araceli.guzman@cimat.mx,antonio.rieser@cimat.mx}
\author[E. Vel\'azquez-Richards]{Eduardo Vel{\'a}zquez-Richards}
\address{IIMAS, Universidad Nacional Autónoma de México, Ciudad Universitaria, Coyoacán, Ciudad de México, C.P. 04510, México}
\email{eduardo.velazquez@iimas.unam.mx}
\begin{abstract}
  We propose three completely data-driven methods for estimating the
  real cohomology groups $H^k (X ; \R)$ of a compact metric-measure
  space $(X, d_X, \mu_X)$ embedded in a metric-measure space $(Y,d_Y,\mu_Y)$, given a
  finite set of points $S$ sampled from a uniform distrbution $\mu_X$ on
  $X$, possibly corrupted with noise from $Y$. We present the results of several computational experiments in the case that $X$
  is embedded in $\R^n$, where two of the three algorithms performed well.
\end{abstract}

{\maketitle}
\section{Introduction}
The problem of estimating the topological invariants of a compact metric-measure space from a finite set of sample points is a central question, and arguably
\emph{the} central question, of topological data analysis, although an
algorithmic solution to this problem has remained elusive since the
beginnings of the field.\footnote{One can naturally compute persistent
  (\v Cech or Vietoris-Rips) homology of a set of sample points - at least in regimes in which
  the computation is tractable - which, when the set is dense enough, can be shown to approximate the persistent homology of an underlying metric-measure space \cite{Chazal_etal_2016} in the bottleneck metric. However, since the persistent homology of
  a metric-measure space is not invariant under continuous
  transformations of the space, it is not a topological invariant, and
  automatic methods of extracting the (singular) homology of a metric-measure space from the persistent homology of point samples are
  still lacking.}  Nonetheless, while methods for estimating
the topological invariants of a space from samples have not yet been
found, a number of classical results do exist which illustrate what such
an algorithm might try to look for. For instance, it was shown in
\cites{Niyogi_Smale_Weinberger_2008,Chazal_etal_2009} that if $X$
is a closed manifold or a compact subset of $\R^n$, then there exist
certain ranges of $r>0$ such that, for a dense enough set of sample
points from $X$, the union of disks of radius $r$ with centers at the
sample points has the same homotopy type as $X$. More recently, it was
shown in \cites{Kalisnik_Lesnik_arXiv_2020} that similar results also
hold when one replaces the balls with appropriately constructed
ellipsoids, and \cite{Latschev_2001} proved a similar theorem where
the Vietoris-Rips complex of a metric space at scale $r>0$ takes the
place of the union of balls.

Despite the promise of these results, however, none of them reveal how
to choose the relevant free parameters for the model, i.e. the radii of the balls, the axes of the ellipses, or the scale for 
the Vietoris-Rips complex, starting
with a given a set of points, and, to the best of our knowledge, no
such proposal has been made to date. Historically, the question of how
to estimate the free parameters in these problems was ultimately put aside in favor of considering all
possible values of the parameters at once to construct a conglomerate invariant
which is now known as a persistence module, the basic object
of study in persistent homology \cites{Edelsbrunner_etal_2002,
	Carlsson_Zomorodian_2005, Carlsson_2009, Chazal_etal_2016}. Closely related to persistent homology are Euler characteristic curves 
	\cites{Smith_Zavala_2021,Laky_Zavala_2024, Amezquita_etal_2021}, in which the Euler characteristic of a parametrized filtration of a space is recorded 
	for every value of the parameter, which have the advantage of being more easily computable than persistent homology. Taken together, these two objects
  are currently the most
popular tools in topological data analysis, and in the past
two decades, they have led to a number of intriguing observations,
particularly in problems which benefit from describing one-dimensional
cycles inside small to medium-sized data sets \cites{DONUT}. Nonetheless, neither persistent homology nor Euler characteristic curves give direct estimates of the topological invariants of the space from which a set of points may have been sampled. Additionally, although there exist several commonly used heuristics for making hypotheses about the singular homology of a space given the output of these algorithms, methods for reliably and automatically extracting topological invariants from these objects have yet to be developed.

In this paper, we return to the original question of how to model a metric-measure space $(X,d_X,\mu_X)$ using a finite set of sample points
in order to estimate a given topological invariant of $(X,d)$. Our models, however,
will be analytic objects - in particular, semigroups of operators -
instead of topological spaces, and our goal will
be only to select a model which enables us to estimate a specific topological invariant of interest instead of trying to recover the space itself up to
homotopy. Additionally, our techniques are ultimately unrelated to those in the current topological data anlaysis literature, and are instead inspired by constructions in spectral and geometric data analysis such as Laplacian Eigenmaps \cite{Belkin_Niyogi_2003} and Diffusion Maps \cite{Coifman_Lafon_2006}. 

Specifically, we propose three related algorithms which use the spectral properties of certain operator semigroups to estimate the real cohomology groups of a metric-measure space
$(X,d_X,\mu_X)$  from point
samples, possibly corrupted with noise, where $X$ has been embedded
in an ambient metric space $(Y,d_Y)$ and the measure $\mu_X$ is uniform. Each of these algorithms works by translating the topological problem of estimating the $q$-dimensional cohomology of a space into the analytic problem of estimating the semigroup of operators generated by the Hodge-Laplacians of the space.
In order to do this, we first use the sample points to construct a family of candidate semigroups, after which we use the structure of the semigroups and several ideas inspired by statistical
model selection techniques in order to choose a model for the invariant of interest which is then used to make the estimation. 

The selection criteria which we have developed are
all designed so that the chosen semigroup roughly captures the ``most
geometry'' in a given dimension. The first method is based on relative von Neumann entropy, an operator-theoretic version of the relative Shannon entropy used in quantum information theory \cite{Wilde_2013}, which the first two authors also
used in \cite{Guzman_Rieser_2024} for clustering and dimension
reduction. We compare this to two new techniques, one using the trace, and the other using a natural metric on the space of Hilbert-Schmidt operators. These criteria are described further in Section \ref{sec:AlgRes}. Once a specific semigroup is chosen, we appeal to standard results in combinatorial Hodge theory in order to identify the real cohomology group
$H^q(X;\R)$ in dimension $q\geq 0$ with a
certain eigenspace of the operators in the semigroup, and the calculation of the Betti numbers follows. Finally, we note that while the 
general ideas of statistical model selection have guided the development
of our algorithms, 
traditional statistical methods don't apply directly to the analytic
problem which we have introduced here, as our problem is inherently operator-theoretic in nature.

\subsection{Related Work}

The spectral properties of graph Laplacians \cite{Chung_1997} and Hodge-Laplacians on simplicial complexes 
\cite{Horak_Jost_2013,Friedman_1998} have long been studied in combinatorics and algebraic topology \cite{Eckmann_1945}, largely in parallel to the 
corresponding objects in differential geometry \cite{Chavel_1984,Berline_etal_2004,}. More recent work \cite{Horak_Jost_2013} 
incorporates weights into combinatorial Hodge-Laplacians, a technique which we
also use here. In \cite{Bartholdi_etal_2012, Smale_Smale_2012}, the authors study a Hodge theory defined at different scales on metric spaces by constructing a variant of the Alexander-Spanier cohomology which incorporates the scale into its construction.

Starting with work on spectral clustering \cite{Luo_Wilson_Hancock_2003}, and continuing with Laplacian Eigenmaps \cites{Belkin_Niyogi_2003,Belkin_Niyogi_2006,Belkin_Niyogi_2008}, Diffusion Maps \cites{Coifman_Lafon_2006}, and Vector Diffusion Maps \cites{Singer_Wu_2012}, the spectral information in different graph Laplacians has been used for a variety of tasks in data analysis, in particular for data clustering and dimension reduction. These methods have enjoyed significant success, although in each of them there is a free parameter left for the user to choose, and which is typically chosen by hand. The first successful heuristic for choosing the free parameter in Diffusion Maps was reported in \cite{Shan_Daubechies_2022}, but it does not apply to the geometric model studied here, as the free parameter in our method depends only on the geometry of the data set and is a priori decoupled from the semigroup parameter. 

The most closely related work to the present article are the articles \cite{Rieser_FODS_2021, Guzman_Rieser_2024}. In the first of these, the second author introduced a method based on (classical) relative Shannon entropy to choose a free parameter to estimate the ``heat flow" on a data set, and then used this to develop a new data clustering algorithm. More recently, in \cite{Guzman_Rieser_2024}, the first and second authors proposed a method to estimate the heat semigroup on a metric-measure space by an empirical heat semigroup built from a graph, and then used that estimation as a step in both a topological method for data clustering and a dimension reduction method similar to Diffusion Maps. The present article may be considered to be a generalization of the methods in \cite{Guzman_Rieser_2024}, and particularly the clustering algorithm, to higher dimensional Hodge Laplacians. Indeed, the clustering algorithm from \cite{Guzman_Rieser_2024} is exactly the $0$-dimensional version of the von Neumann entropy method below, but with an additional step added to identify which points belong to each cluster. Curiously, neither
of the other two methods considered here worked well in the $0$-dimensional case on the examples we tested, although the reasons for that remain unclear.

\subsection{Contributions}
We introduce three new methods to produce a fully data-driven estimate of the Betti numbers of a manifold from uniformly distributed data, and we show that they work well on a number of synthetic data sets, as well as real data sets which can be presumed to be close to uniformly distributed. To the best of our knowledge, this is the first work to achieve a data-driven estimate the Betti numbers of a manifold or metric-measure space, even experimentally and with the hypotheses on the probability distributions which we require.

In addition, although the present work is related to (and was directly inspired by) Diffusion Maps and related methods, it also has several important differences. In particular, the weights we choose on the graphs and simplices are significantly different than in those methods, and, in our opinion, are also more intuitive and geometric. The weights we assign to the graphs and simplicial complexes in our algorithms are simply meant to estimate the distances and $n$-dimensional volumes of the target space, whereas the weights on the graphs in Diffusion Maps and Laplacian Eigenmaps are built from estimates of a heat kernel on the graph using the known heat kernel on the ambient space. Furthermore, as mentioned above, the free parameter in our methods is a purely geometric quantity, and it is not directly related to the diffusion parameter in the induced semigroups.

 The algorithms and examples presented here give a first demonstration of the feasibility of directly estimating the real cohomology (as opposed to the persistent homology) of a metric-measure space, and open the door to the development of a number of new and exciting avenues in spectral, geometric, and topological data analysis, in addition to providing additional motivation for the further development of noncommutative methods in statistics. We include some specific questions and conjectures in Section \ref{sec:Discussion}. 
Additionally, although our present goal is only to demonstrate the estimation of the real cohomology groups $H^k(X,\R)$ in a number of examples, our estimation of the semigroups generated by the Hodge-Laplacians also has considerable intrinsic interest independent of the cohomology calculation.

Finally, while our algorithms function well in the
examples we have presented, the hypothesis that the data be uniformly
distributed is admittedly restrictive, and the methods fail when 
  the data is not close to satisfying this hypothesis. This naturally limits their 
 immediate practical use, and it will be essential
 to generalize them in future work to handle other distributions. The main
 contributions of
 our article, however, are, first, to demonstrate the existence of a class
 of algorithms which experimentally solves, in an important but very specific case, a difficult problem for which no other solution has been found, even in the restricted setting considered here, and, second, to generate concrete new conjectures
 and specific directions for future research on the questions which these methods inevitably raise.

\subsection{Organization}

In Section \ref{sec:Cohomology}, we
review the relevant algebraic topology and the combinatorial Hodge theory which we use in our algorithms. In Section \ref{sec:Estimation}, we describe the model selection procedure and introduce the
relative von Neumann entropy which is used in some of our tests. In Section \ref{sec:AlgRes}, we present the algorithms and results, and,
finally, in Section \ref{sec:Discussion}, we state a number of conjectures and
discuss avenues for future work.

\section{The Weighted Vietoris-Rips Complex and Combinatorial Hodge Theory for Metric Spaces} \label{sec:Cohomology}

We begin our discussion by reviewing some general results on finite weighted simplicial complexes and their cohomology and Hodge Laplacians, following \cite{Horak_Jost_2013}. We also introduce the specific weights which we will use for the Vietoris-Rips complexes in our algorithms. In the following, we let $\R^+ \coloneqq \{x \in \R \mid x \geq 0\}$.

\subsection{The Combinatorial Hodge Theorem and Weighted Simplicial Complexes}

We let $K$ denote an abstract simplicial complex on the set $[n] \coloneqq \{0,\dots,n-1\}$, let $S_i(K)$ denote the $i$-dimensional simplices of $K$, let $C^n(K;\R)$, $n\geq 0$ denote the simplicial cochain complexes of $K$ with real coefficients, and let $\delta_n:C^n(K;\R)\to C^{n+1}(K;\R)$ be the corresponding coboundary operator. For any choice of inner products $(\cdot,\cdot)_{C^n}$, $n\geq 0$, we define the adjoint $\delta_n^*:C^{n+1}(K;\R) \to C^{n}(K;\R)$ to be the unique operator which satisfies
\begin{equation*}
	(\delta_n v,w)_{C^{n+1}} = (v,\delta_n^*w)_{C^{n}}
\end{equation*} 
for all $v \in C^{n}$ and $w\in C^{n+1}$.

The $n$-dimensional combinatorial Hodge-Laplacian $\Delta_n(K):C^n(K;\R) \to C^n(K;\R)$ as
\begin{equation*}
	\Delta_{n}(K) := \delta_{n}^*\delta_{n} + \delta_{n-1}\delta_{n-1}^*.
\end{equation*}
We now state the combinatorial Hodge theorem due to Eckmann \cite{Eckmann_1945}.

\begin{theorem}[Eckmann 1945] For any abstract simplicial complex $K$
	with Hodge-Laplacians $\Delta_n(K)$,
	\begin{equation*}
		\ker \Delta_n(K) \cong H^k(K;\R)
	\end{equation*}
\end{theorem}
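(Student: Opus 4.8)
The plan is to establish the combinatorial Hodge decomposition of each cochain space and then identify every cohomology class with a unique harmonic cochain; the statement should be read with $k=n$, so that the claim to prove is $\ker\Delta_n(K)\cong H^n(K;\R)$, where $H^n(K;\R)=\ker\delta_n/\operatorname{im}\delta_{n-1}$. Everything takes place in the finite-dimensional inner product spaces $C^n(K;\R)$, and I would use throughout the cochain identity $\delta_n\delta_{n-1}=0$, which in particular gives $\operatorname{im}\delta_{n-1}\subseteq\ker\delta_n$ so that the quotient defining cohomology is well posed.

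The central step is to characterize the kernel of the Hodge-Laplacian as the space of \emph{harmonic} cochains. Since $(\delta_n^*\delta_n)^*=\delta_n^*\delta_n$ and $(\delta_{n-1}\delta_{n-1}^*)^*=\delta_{n-1}\delta_{n-1}^*$, the operator $\Delta_n$ is self-adjoint, and for any $v\in C^n(K;\R)$ the defining property of the adjoints yields
\[
(\Delta_n v, v)_{C^n} = (\delta_n v, \delta_n v)_{C^{n+1}} + (\delta_{n-1}^* v, \delta_{n-1}^* v)_{C^{n-1}} = \|\delta_n v\|^2 + \|\delta_{n-1}^* v\|^2 .
\]
Thus $\Delta_n$ is positive semidefinite, and $\Delta_n v=0$ forces $(\Delta_n v,v)_{C^n}=0$, whence $\delta_n v=0$ and $\delta_{n-1}^* v=0$; the reverse inclusion is immediate from the definition of $\Delta_n$. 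This shows $\ker\Delta_n=\ker\delta_n\cap\ker\delta_{n-1}^*$.

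Next I would invoke the elementary fact that for a linear map $T$ between finite-dimensional inner product spaces one has $\operatorname{im}(T)^\perp=\ker(T^*)$, giving an orthogonal splitting of the codomain. Applied to $T=\delta_{n-1}$, this produces $C^n(K;\R)=\operatorname{im}\delta_{n-1}\oplus\ker\delta_{n-1}^*$. I then claim $\ker\delta_n=\operatorname{im}\delta_{n-1}\oplus\ker\Delta_n$: the inclusion $\supseteq$ follows from $\operatorname{im}\delta_{n-1}\subseteq\ker\delta_n$ together with $\ker\Delta_n\subseteq\ker\delta_n$, while for $\subseteq$ I take $v\in\ker\delta_n$, use the splitting to write $v=\delta_{n-1}u+v_0$ with $v_0\in\ker\delta_{n-1}^*$, and observe that $v_0=v-\delta_{n-1}u$ lies in $\ker\delta_n$ (both terms do) and in $\ker\delta_{n-1}^*$, hence in $\ker\delta_n\cap\ker\delta_{n-1}^*=\ker\Delta_n$. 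The sum is direct since the summands are orthogonal: $(\delta_{n-1}u,h)=(u,\delta_{n-1}^*h)=0$ for every $h\in\ker\Delta_n$.

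Finally I would pass to the quotient. The decomposition $\ker\delta_n=\operatorname{im}\delta_{n-1}\oplus\ker\Delta_n$ shows that the composite $\ker\Delta_n\hookrightarrow\ker\delta_n\twoheadrightarrow\ker\delta_n/\operatorname{im}\delta_{n-1}=H^n(K;\R)$, sending a harmonic cochain $h$ to its class $[h]$, is injective (a harmonic cochain lying in $\operatorname{im}\delta_{n-1}$ is orthogonal to itself, hence zero) and surjective (each class has a harmonic representative by the direct sum), which is the asserted isomorphism. There is no genuine analytic obstacle here, as all spaces are finite-dimensional; the one point requiring care is the positivity identity in the second step, since it is precisely the identification of $\ker\Delta_n$ with $\ker\delta_n\cap\ker\delta_{n-1}^*$ that makes the orthogonal bookkeeping of the last two steps go through.
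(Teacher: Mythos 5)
Your argument is correct and complete: the identity $(\Delta_n v,v)=\|\delta_n v\|^2+\|\delta_{n-1}^*v\|^2$, the resulting identification $\ker\Delta_n=\ker\delta_n\cap\ker\delta_{n-1}^*$, and the orthogonal splitting $\ker\delta_n=\operatorname{im}\delta_{n-1}\oplus\ker\Delta_n$ constitute the standard proof of the finite-dimensional combinatorial Hodge theorem, and you are right to read the statement's $H^k$ as $H^n$ (a typo in the paper). The paper itself gives no proof, simply citing Eckmann (1945), so there is nothing to compare against; your write-up is exactly the classical argument that reference contains.
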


In \cite{Horak_Jost_2013}, it was shown that a choice of inner product 
$(\cdot,\cdot)_{C^n}$ is equivalent to a choice of weight function
\begin{equation*}
	w:\bigcup_{i=0}^{\dim K}S_i(K) \to \R^+
\end{equation*}
such that
\begin{align*}
	(f,g)_{C^n} &= \sum_{\sigma \in S_i(K)} w(\sigma)f(\sigma)g(\sigma)\\
	(e_\sigma,e_{\sigma'})_{C^n} &= \delta_{\sigma,\sigma'},
\end{align*}
where, for any $\sigma,\sigma' \in K$, $\delta_{\sigma,\sigma'} = 1$ if $\sigma = \sigma'$ and $0$ otherwise, and $e_\sigma:C^n(K) \to \R$ is the function
\begin{equation*}
	e_\sigma([\sigma']) = \begin{cases}
		1, & [\sigma] = [\sigma'], \\
		0, & \text{Otherwise}
	\end{cases}
\end{equation*}

\begin{definition}
	A non-negative function $w:K\to \R^+$ on a simplicial complex $K$ assigning a non-negative real number to each simplex $\sigma \in K$ is 
	called a \emph{weight function} for $K$. A pair $(K,w)$ consisting
	of a simplicial complex and a weight function is called a 
	\emph{weighted simplicial complex}.
\end{definition}

Summarizing the above discussion, any weighted simplicial
complex $(K,w)$ induces a unique collection of combinatorial
Hodge-Laplacians $\Delta_n:C^n(K;\R) \to C^n(K;\R)$. Given a weighted simplicial complex $(K,w)$ and its induced combinatorial Hodge-Laplacians $\Delta_n(K)$, $n\geq 0$, we define

\begin{definition}
	The \emph{operator (heat) semigroup generated by the Hodge-Laplacian $\Delta_n(K)$} is
	the set of operators $\{e^{-t\Delta_n}:C^n(K;\R) \to \R\mid t \geq 0\}$. Given a weighted simplicial complex $(K,w)$, we will also 
	call this the \emph{$n$-dimensional heat semigroup of $K$}
\end{definition}

We refer the reader to \cites{Jiang_etal_2011, Lim_2020, Horak_Jost_2013} for more details on
combinatorial Hodge Laplacians.

\subsection*{The Weighted Vietoris-Rips Complex}

We will require certain weights for the Vietoris-Rips complexes we
use in our algorithms. In this section, we recall the definition of the Vietoris-Rips complexes and we define these weights for use in the following. 
\begin{definition}
	Let $(X,d)$ be a semipseudometric space and $r\geq 0$ a non-negative real number. We define a simplicial complex
	$\Sigma_{VR}(X)$, called the \emph{Vietoris Rips complex of $X$ at 
		scale $r$} by
Given a metric space $(X, d_X)$ and let $r\geq 0$ be a non-negative real number. We
construct a simplicial complex $\Sigma_{VR}(X)$, called the 
\emph{Vietoris-Rips complex of $X$ at scale $r$}
in the following manner:
\begin{align*}
	S_0(\Sigma_{VR}(X)) & =  X\\
	& \vdots  \\
	S_n(\Sigma_{VR}(X)) & = \{ (x_0, \ldots, x_n) \in X \times \cdots \times X \mid
	\tmop{diam} (x_0, \ldots, x_n) \leq r \}\\
	& \vdots 
\end{align*}
where $S_n(\Sigma_{VR}(X))$ denotes the set of $n$-dimensional simplices
of $\Sigma_{VR}(X)$.
\end{definition}

To define our weights $w_{VR}$ for the Vietoris-Rips complex
of $(X,d_X)$ at scale $r\geq 0$, we start 
by assigning to each vertex weight $1$, we assign 
weights to edges by $w_{VR}({v_i,v_j}) \coloneqq d_X(v_i,v_j)$, and for 
higher-dimensional simplices, we assign the weight $w_{VR}(\{v_0,\dots,
v_n\})$ to be the volume of a Euclidean simplex with edges that have lengths
$d_X(v_i,v_j)$, where the volume is computed using Heron's formula \cite{Kock_2022}.

\begin{definition}
	Let $(X,d)$ be a metric space and let $r\geq 0$. We define the \emph{weighted
		Vietoris-Rips complex $(\Sigma_{VR}(X),w_{VR})$} at scale 
		$r$ to be the 
		pair where $\Sigma_{VR}(X)$ is the Vietoris-Rips complex 
		of $X$ at scale $r$ and $w_{VR}$ is the weight function
		defined in the paragraph above.
\end{definition}

Finally, for manifolds, the following theorem implies that the
cohomology groups of a manifold $M$ and the
Vietoris-Rips complex of a finite subset which is sufficiently Gromov-Hausdorff close to the manifold are isomorphic. 
It implies that the combinatorial Hodge-Laplacians of
such a simplicial complex may serve to model real cohomology
for the manifold.

\begin{theorem}[Latschev \cite{Latschev_2001}]
	Let $X$ be a closed Riemannian manifold. Then there exists
	$\epsilon_0 > 0$ such
	that for every $0 < \epsilon \leq \epsilon_0$
	there exists a $\delta > 0$ such that the geometric realization $|Y_\epsilon|$ of the
	Vietoris-Rips complex at scale $\epsilon$ of any metric space $Y$ which has Gromov-Hausdorff distance less than $\delta$ to $X$ is
	homotopy equivalent to $X$.
\end{theorem}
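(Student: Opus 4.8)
The plan is to deduce the statement from Hausmann's theorem on the Vietoris--Rips complex of the manifold itself, together with a perturbation argument that transports the conclusion across a Gromov--Hausdorff perturbation. Write $\mathrm{VR}(Z,s)$ for the Vietoris--Rips complex of a metric space $Z$ at scale $s$. The first ingredient I would invoke is Hausmann's theorem: for a closed Riemannian manifold $X$ there is $\epsilon_0>0$, controlled by the convexity and injectivity radii of $X$, such that $|\mathrm{VR}(X,s)|\simeq X$ for all $0<s\le\epsilon_0$, together with the refinement that the inclusion-induced maps $|\mathrm{VR}(X,s)|\hookrightarrow|\mathrm{VR}(X,s')|$ are homotopy equivalences for $0<s\le s'\le\epsilon_0$. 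The geometric content here is that balls of radius below the convexity radius are geodesically convex and contractible, which lets one build the equivalence $|\mathrm{VR}(X,s)|\to X$ by a geodesic barycentric (or straight-line) map and verify compatibility as the scale increases.

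Next I would convert the hypothesis $d_{GH}(X,Y)<\delta$ into simplicial data. A Gromov--Hausdorff distance below $\delta$ provides a correspondence, hence functions $f\colon Y\to X$ and $g\colon X\to Y$ whose distortion is at most $2\delta$ and for which $f\circ g$ and $g\circ f$ move points by at most $2\delta$. Thus $f$ and $g$ send $s$-simplices to $(s+2\delta)$-simplices and induce simplicial maps interleaving the two filtrations,
\[
\mathrm{VR}(X,s)\xrightarrow{g}\mathrm{VR}(Y,s+2\delta)\xrightarrow{f}\mathrm{VR}(X,s+4\delta)\xrightarrow{g}\mathrm{VR}(Y,s+6\delta)\to\cdots,
\]
and, because each two-step composite displaces every vertex by at most $2\delta$, it is contiguous to the corresponding filtration inclusion and hence homotopic to it. Choosing $\delta$ small enough that every scale in the ladder lies in $(0,\epsilon_0]$, the $X$-to-$X$ composites are homotopy equivalences by the first step; reading off the resulting factorizations shows that $X$ is a homotopy retract of $|\mathrm{VR}(Y,s)|$.

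The remaining and decisive step is to upgrade this retraction to a homotopy equivalence, which is where the genuine work lies and which does \emph{not} follow formally from the interleaving alone, since Vietoris--Rips complexes can otherwise carry uncontrolled homotopy. Here I would imitate Hausmann's construction directly for $Y$: using the correspondence to pull the cover of $X$ by $\epsilon$-balls back to $Y$, build a continuous map $|\mathrm{VR}(Y,\epsilon)|\to X$ by geodesic averaging and a section $X\to|\mathrm{VR}(Y,\epsilon)|$ via a subordinate partition of unity, and then contract both composites to the identity along geodesics. The hard part is quantitative and geometric: all of these barycenters, partitions of unity, and straight-line homotopies must take place inside geodesically convex balls, so one must choose $\delta$ small relative to both $\epsilon$ and the convexity radius of $X$ to guarantee that the images of simplices under $f$, $g$, and the averaging maps never escape a convex ball. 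Verifying that these homotopies are well defined and continuous on the full complex $|\mathrm{VR}(Y,\epsilon)|$, uniformly over all admissible $Y$, is the main obstacle; the interleaving bookkeeping and the passage from $f,g$ to simplicial maps are routine by comparison.
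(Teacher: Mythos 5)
First, a point of reference: the paper offers no proof of this statement at all --- it is quoted as Latschev's theorem and supported only by the citation to \cite{Latschev_2001} --- so your attempt can only be measured against Latschev's published argument. That argument is in fact built from the ingredients you name: Hausmann's theorem for the manifold itself, and a barycentric (Riemannian center-of-mass) map from the realization of the Vietoris--Rips complex into $X$, with all estimates anchored to the convexity/injectivity radius. Your first half is correct and cleanly executed: a correspondence witnessing $d_{GH}(X,Y)<\delta$ yields maps $f,g$ of distortion at most $2\delta$, these induce simplicial maps interleaving the two filtrations, the two-step composites are contiguous (hence homotopic) to the filtration inclusions, and combined with the refined Hausmann statement this exhibits $X$ as a homotopy retract of $|\mathrm{VR}(Y,\epsilon)|$. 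One small quantifier repair: if $\epsilon=\epsilon_0$ there is no room for the ladder of scales $\epsilon,\epsilon+2\delta,\epsilon+4\delta,\dots$ inside $(0,\epsilon_0]$, so the $\epsilon_0$ asserted in the theorem must be taken strictly below the Hausmann bound.

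The genuine gap is the one you flag yourself: everything after ``here I would imitate Hausmann's construction directly for $Y$'' is a description of what a proof must contain rather than a proof, and that content is essentially the entirety of Latschev's paper. Concretely, what remains to be done is (i) to define the map $T\colon|\mathrm{VR}(Y,\epsilon)|\to X$ by taking the center of mass of the projected vertices and to verify that this center of mass exists, is unique, and varies continuously on the whole (generally infinite, non-locally-finite) complex, which requires every simplex to project into a single strongly convex ball and is precisely where $\delta$ must be chosen small relative to $\epsilon$ and $\epsilon$ small relative to the convexity radius; (ii) to produce a map in the other direction for an arbitrary metric space $Y$ (not a finite sample), e.g.\ via a finite subcover of $X$ by small balls around projected points and a subordinate partition of unity; and (iii) to construct the two geodesic straight-line homotopies contracting the composites to the identities, each of which again needs a convexity estimate to be well defined. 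None of this can be waved through, because the retract statement obtained from the interleaving genuinely does not imply homotopy equivalence: a priori $|\mathrm{VR}(Y,\epsilon)|$ could carry extra homotopy that splits off. As a plan your proposal is faithful to the actual proof; as a proof it stops exactly where the theorem begins.
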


\section{Estimation of the $q$-Dimensional Heat Semigroups}\label{sec:Estimation}

We have followed three general principles when designing the procedures
presented here for estimating the topological invariants of a metric measure space $(X,d,\mu)$. First, instead of using
the point samples to construct a single space $(\hat{X}, \hat{d}, \hat{\mu})$
which is homotopy equivalent to the original space $(X, d, \mu)$, we use the
point samples $S$ to estimate the \emph{invariants} of $(X, d, \mu)$ and
nothing more. This shift in perspective adds a significant amount of
flexibility to the problem, and, in particular, it allows us to choose
different model spaces for estimating different homotopy invariants. In this
case, each dimension of the real cohomology groups of $X$ is treated separately. 
Second, we make the hypothesis that the correct model for a 
space in a given dimension is the one in which the local geometry/combinatorics reflected 
in the model is as different as
possible from its global geometry/combinatorics. This hypothesis
has been a reliable guide in constructing both the clustering algorithms in \cite{Rieser_FODS_2021} and the dimension reduction and clustering algorithms in \cite{Guzman_Rieser_2024}, and our work here is an attempt to extend
its application to higher-dimensional topological invariants. To see
why this heuristic should be expected to work, consider the 
Vietoris-Rips complexes of a finite sample of a compact manifold,
where the sample is sufficiently dense so that the Vietoris-Rips complex has the same homotopy type as the manifold for some values
of the scale parameter $r>0$, as guaranteed by Latshev's theorem. 
When the scale parameter $r>0$ of a Vietoris-Rips complex is 
too small, then the star of any given $n$-dimensional simplex will
contain much of the connected component it is contained in, and so the local and global combinatorics are expected to be similar. On the other hand, when 
the parameter $r>0$ is too large, then there is only one $n$-connected component in the simplicial complex, and much of this component will again be in the star of any given simplex, and so the local and global combinatorics of the complex will once again be expected to be close to each other. Finally, for such intermediate values of $r>0$ where the complex
correctly approximates the manifold, we expect that the star of any given simplex to be small relative to the whole (indeed, the proof of Latschev's theorem indicates exactly this), and so the local and global combinatorics diverge.  In particular, we would like
the star of any simplex in the relevant dimension to be as small as possible
(but still preserving interesting local geometry in that dimension) while still giving
the correct cohomology globally.

Guided by these principles, the idea behind our algorithms for estimating the real cohomology of a space is that
the difference between local and global geometry in dimension $q$ at scale $r
> 0$ may be measured by comparing the heat operators $e^{- s \Delta_{q,r}}$ and
$\lim_{t \to \infty} e^{- t \Delta_{q,r}}$, where $s \ll \infty$ and $\Delta_{q,r}$ is the $q$-th dimensional
Hodge Laplacian of the relevant model at scale $r$ (in this case the
Vietoris-Rips complex at scale $r$). Indeed, when $s$ is relatively small, the
spectral data of $e^{- s \Delta_{q,r}}$ contains a wealth of information about the
local $q$-dimensional combinatorics of the Vietoris-Rips complex, and when $s
\ll t$, the spectral information in $e^{- t \Delta_{q,r}}$ is mostly topological
in nature: all of the eigenvalues are either $1$ or close to $0$, and the
$1$-eigenspace is isomorphic to the $q$-dimensional real cohomology. There are
any number of ways to compare finite-dimensional operators, and in the
following, we have chosen the relative von Neumann entropy, the Hilbert-Schmidt metric, and the difference between the traces of the operators. Our estimate in any given dimension $q$ will then be given by the
scale $\hat{r}_q > 0$ for which the quantity in question is maximal.

In the remainder of this section, we recall the definition and some basic facts
about quantum relative entropy.

Our algorithm works by comparing spectral information contained in the heat semi-groups generated by the combinatorial Hodge Laplacians of 
different simplicial complexes. The spectrum of the Hodge Laplacians encode geometric information, which depends heavily on the inner product 
chosen for the cochain groups $C^i(K;\R)$, or, equivalently, the choice of weights for the simplices in $K$. The weight function which we choose 
below is meant to reflect the underlying geometry of the space approximated by the simplicial complex.

\subsection{Relative Entropy in Operator Spaces}

We say that a Hermitian operator $\rho$ on a Hilbert space $H$ is
\emph{positive} if all the eigenvalues $\lambda$ of $\rho$ satisfy
$\lambda \geq 0$, and $\rho$ is said to be \emph{strictly positive} if all
of the eigenvalues of $\rho$ are positive. We also call a positive Hermitian
operator a We define the \emph{support} of an operator $\rho$ on $H$ to be
the set
\[ \tmop{supp} \rho = \{ v \in H \mid \rho v \neq 0 \} \]
For two positive operators $\rho$ and $\sigma$ with $\tmop{supp} \rho \subset
\tmop{supp} \sigma$ we define the \emph{quantum relative entropy $H (\rho
| | \sigma)$} by
\[ H (\rho | | \sigma) \assign \tmop{Tr} (\rho \log \rho - \rho \log \sigma) .
\]
We extend the definition to the cases $\tmop{supp} \rho$ While the quantum
relative entropy is not symmetric, it is $0$ iff $\rho = \sigma$, and positive
when $\tmop{Tr} \rho$, $\tmop{Tr} \sigma \leq 1$, and so the relative entropy
provides a way to measure the degree of difference between two positive
operators. See {\cite{Wilde_2013}} for more information about quantum relative entropy.

For a one parameter semigroup $\{ e^{- t L} \}_{t \in \R}$ generated by a
finite-dimensional positive operator $L$, the next result implies that quantum
relative entropy is continuous with respect to the semigroup parameters.

\begin{theorem}
  \label{thm:Entropy diagonalizable}Let $\rho$ and $\sigma$ be simultaneously
  diagonalizable $n$-dimensional density operators such that $\tmop{supp} \rho
  \subset \tmop{supp} \sigma$. Then
  \[ H (\rho | | \sigma) = \sum_{i = 1}^n \lambda_i \log \lambda_i - \lambda_i
     {\log \lambda_i'}  \]
  where the $\lambda_i$ and $\lambda'_i$ are eigenvalues of $\rho$ and
  $\sigma$, respectively.
\end{theorem}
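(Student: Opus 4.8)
The plan is to diagonalize both operators simultaneously, reduce the trace to a sum over a common eigenbasis, and then let the functional calculus make the identity transparent. First I would invoke the hypothesis that $\rho$ and $\sigma$ are simultaneously diagonalizable: since density operators are Hermitian and positive, simultaneous diagonalizability is equivalent to commutativity, and the spectral theorem then furnishes an orthonormal basis $\{v_i\}_{i=1}^n$ of $H$ consisting of common eigenvectors, with $\rho v_i = \lambda_i v_i$ and $\sigma v_i = \lambda_i' v_i$, where $\lambda_i, \lambda_i' \geq 0$. Because the trace is independent of the choice of orthonormal basis, it suffices to compute $\tmop{Tr}(\rho \log \rho - \rho \log \sigma)$ in this basis.

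Next I would apply the functional calculus to $\log \rho$ and $\log \sigma$. On the support of $\rho$ (respectively $\sigma$), where the relevant eigenvalues are strictly positive, $\log \rho$ and $\log \sigma$ are diagonal in the basis $\{v_i\}$ with eigenvalues $\log \lambda_i$ and $\log \lambda_i'$; on the kernel we adopt the standard convention $0 \log 0 = 0$, justified by $\lim_{x \to 0^+} x \log x = 0$. Since $\rho$, $\log \rho$, and $\log \sigma$ are all diagonal in the common basis, so are the products $\rho \log \rho$ and $\rho \log \sigma$, with $i$-th diagonal entries $\lambda_i \log \lambda_i$ and $\lambda_i \log \lambda_i'$ respectively. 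Summing the diagonal entries then yields $\tmop{Tr}(\rho \log \rho - \rho \log \sigma) = \sum_{i=1}^n (\lambda_i \log \lambda_i - \lambda_i \log \lambda_i')$, which is the claimed formula.

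The main subtlety, and the step that genuinely uses the hypothesis, is the treatment of the vanishing eigenvalues, where $\log$ is singular. The convention $0 \log 0 = 0$ disposes of any index with $\lambda_i = 0$ in the first term. For the second term, the potential problem is an index with $\lambda_i > 0$ but $\lambda_i' = 0$, which would force $\lambda_i \log \lambda_i' = -\infty$ and make $H(\rho || \sigma) = +\infty$. This is precisely what the support hypothesis $\tmop{supp} \rho \subset \tmop{supp} \sigma$ rules out: if $\lambda_i > 0$ then $v_i \in \tmop{supp} \rho \subset \tmop{supp} \sigma$, so $\lambda_i' > 0$ and $\log \lambda_i'$ is finite. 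Hence every summand is well-defined and finite, and the identity holds as stated. The remaining details are routine verifications that the functional calculus respects the common diagonalization, so no further obstacle arises.
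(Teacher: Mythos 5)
Your proof is correct and follows the same route as the paper's one-line argument: diagonalize $\rho$ and $\sigma$ in a common eigenbasis and read off the trace as the sum of diagonal entries. You additionally spell out the treatment of vanishing eigenvalues via the convention $0\log 0 = 0$ and the role of the support hypothesis $\tmop{supp}\rho \subset \tmop{supp}\sigma$, details the paper's proof leaves implicit.
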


\begin{proof}
  If the matrices $\rho$ and $\sigma$ are simultaneously diagonalizable, then
  the expression above is the sum of the eigenvalues of $\rho \log \rho - \rho
  \log \sigma$, which is equal to the trace.
\end{proof}

We therefore conclude the following.

\begin{corollary}
  Let $L$ be a finite-dimensional positive operator, and let $s,t \in \R$. Then
  the function $h: {\R\times \R}{\rightarrow}{\R}$ given by $h (s,t) = H (e^{- s L} | |
  e^{- t L})$ is continuous.
\end{corollary}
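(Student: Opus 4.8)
The plan is to reduce the statement to the explicit formula furnished by Theorem \ref{thm:Entropy diagonalizable} and then to observe that the resulting expression is a finite sum of elementary continuous functions of $(s,t)$. First I would diagonalize the fixed positive operator $L$: writing its eigenvalues as $\mu_1,\dots,\mu_n$ (all non-negative, since $L$ is positive) with respect to an orthonormal eigenbasis, the operators $e^{-sL}$ and $e^{-tL}$ are, for every $s,t\in\R$, simultaneously diagonalizable in that same basis, with eigenvalues $e^{-s\mu_i}$ and $e^{-t\mu_i}$ respectively. In particular both operators are strictly positive, so each has full support, and the hypothesis $\supp e^{-sL}\subset\supp e^{-tL}$ of Theorem \ref{thm:Entropy diagonalizable} holds automatically.

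Second, I would apply Theorem \ref{thm:Entropy diagonalizable} with $\rho=e^{-sL}$ and $\sigma=e^{-tL}$. Substituting $\lambda_i=e^{-s\mu_i}$ and $\lambda_i'=e^{-t\mu_i}$ into the formula and simplifying the logarithms (using $\log e^{-s\mu_i}=-s\mu_i$, and likewise for $\sigma$) yields the closed form
\begin{equation*}
	h(s,t) = \sum_{i=1}^n \mu_i(t-s)e^{-s\mu_i}.
\end{equation*}
Each summand is the product of the polynomial $\mu_i(t-s)$ with the exponential $e^{-s\mu_i}$, both continuous (indeed real-analytic) on $\R\times\R$, so each summand is continuous, and a finite sum of continuous functions is continuous. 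This gives the claim.

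The only point genuinely requiring care is the applicability of Theorem \ref{thm:Entropy diagonalizable}, whose statement is phrased for density operators, whereas $e^{-sL}$ and $e^{-tL}$ need not have trace one. I expect this to be the main (and only mild) obstacle, and I would address it by noting that the proof of that theorem uses nothing beyond simultaneous diagonalizability: the quantity is simply the trace of $\rho\log\rho-\rho\log\sigma$ read off from the common eigenbasis, so it remains valid verbatim for any pair of simultaneously diagonalizable strictly positive operators. If one prefers to invoke the theorem exactly as stated, one can argue directly from the definition $H(\rho||\sigma)=\tmop{Tr}(\rho\log\rho-\rho\log\sigma)$: since $\rho$ and $\sigma$ commute, the operator $\rho\log\rho-\rho\log\sigma$ is diagonal in the common eigenbasis with entries $\lambda_i\log\lambda_i-\lambda_i\log\lambda_i'$, whose trace is exactly the displayed sum, and continuity follows as above. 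A secondary detail is the contribution of any zero eigenvalues of $L$: a value $\mu_i=0$ produces the term $0\cdot(t-s)\cdot 1=0$, so such indices drop out of the sum and cause no discontinuity.
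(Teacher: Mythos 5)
Your proof is correct and follows the same route as the paper, which simply states that the corollary ``follows immediately from Theorem \ref{thm:Entropy diagonalizable}''; you have filled in the details the paper leaves implicit, including the explicit closed form $h(s,t)=\sum_i \mu_i(t-s)e^{-s\mu_i}$ and the (valid) observation that the theorem's restriction to density operators is immaterial since its proof only uses simultaneous diagonalizability.
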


\begin{proof}
  Follows immediately from Theorem \ref{thm:Entropy diagonalizable}.
\end{proof}

\section{Algorithms and Results}\label{sec:AlgRes}
In this section, we describe in detail our algorithms for estimating the real cohomology groups
of a metric measure space from point samples. All three methods rely on two basic
hypotheses mentioned in Section \ref{sec:Estimation}:
\begin{enumerate}
	\item That the optimal $q$-dimensional combinatorial model for a
	metric measure space in the collection of Vietoris-Rips complexes at scales $r
	> 0$ is the one in which the local and global combinatorics of the $q$-skeleta
	are maximally different, and
	\item That the difference between the local and
	global combinatorics of the $q$-skeleton of a simplicial complex may be
	measured by comparing the operators $e^{- s \Delta_q}$ and $e^{- t \Delta_q}$
	for some $0 < s \ll t$ and $0 \lll t$, where $\Delta_q$ is the $q$-dimensional Hodge Laplacian.
\end{enumerate} Combining
these assumptions with methods for measuring the difference between compact operators leads directly to our algorithms:
\begin{enumerate}
	\item First, choose a method 
	for comparing the operators $e^{-s \Delta_{q,r}}$ and $e^{-t \Delta_{q,r}}$, which we write $D(e^{-s \Delta_{q,r}},e^{-t \Delta_{q,r}})$. We assume that the method is a non-negative real-valued
	function and that a larger number indicates a larger difference between the operators, but other options are also possible. For any $r \geq 0$, write the comparison 
	\begin{equation*}
		\bar{D}_r\coloneqq \lim_{t \to \infty} D(e^{-s \Delta_{q,r}},e^{-t \Delta_{q,r}})
	\end{equation*}.
	\item Second, select a value $t_0 \gg 0$ such that $D(r)\coloneqq D(e^{-s \Delta_{q,r}},e^{-t_0 \Delta_{q,r}})$ approximates $\bar{D}_r$.
	(As we observe that this convergence is very fast in our methods, $t$ does not need to be extremely large. We chose $t_0=250$.)
	\item Third, Fix a finite value for $s$ which is much less than $t$. (We chose $1$),
	\item Next, choose $\hat{r}$ to be the value of $r$, $0<r<\text{diameter}(S)$, where $S$ is the set of point samples, such that $e^{-s \Delta_{q,r}}$ and $e^{-t_0 \Delta_{q,r}}$ are maximally different, i.e.
	\begin{equation*}
		\hat{r} \coloneqq \text{argmax } D_r
	\end{equation*}  
	\item Finally, compute the kernel of $\Delta_{q,\hat{r}}$ 
	to find the esimated cohomology group $H^q \left( X ; \R \right)$.
\end{enumerate}  
We present the pseudocode for our algorithms in Algorithm \ref{alg1} below. We considered three different options for measuring the difference between the operators $e^{-\Delta_{q,r}}$ and $e^{-t\Delta_{q,r}}$ for $t\gg0$. One is the relative von Neumann entropy, described above, a second is the Hilbert-Schmidt norm of the difference between the two matrices, and
the third is the difference of the traces of the two operators. Other quantities are of course possible as well.

\begin{algorithm}
	\caption{Estimation algorithm for the $q$-th Betti number using relative von Neumann entropy}\label{alg1}
	\begin{algorithmic}[1]
		\For{$r<Diam(S)$} 
		\State{At scale $r$, construct the graph $G_r$ and
			the Vietoris-Rips complex $\Sigma_r$ from the
			point cloud}
		\State{Compute the $q$-dimensional Hodge Laplacian $\Delta_{q,r}$, the heat operator $e^ {- \Delta_{q,r}}$ and $e^{- t_0 \Delta_r}$ for some
			$t_0$ large (we use $t_0 = 250$).}
		
		\State{Choose $s\ll t$ (we use $s=1$) and compute one of the following as $D(r)$: 
			\begin{enumerate}
				\item The relative von Neumann entropy, $D(r) \coloneqq H(\rho_r || \sigma_r)$ where $\rho =
				\frac{1}{\text{Tr }(e^{- \Delta_{q,r}})} e^{- \Delta_{q,r}}$ and $\sigma = \frac{1}{\text{Tr }(e^{-
						t_0 \Delta_{q,r}})} e^{- t_0 \Delta_{q,r}}$
				\item The Frobenius norm of the difference of the two operators $\e^{- s \Delta_{q,r}}$ and $e^{- t_0 \Delta_{q,r}}$, i.e. $D(r) \coloneqq \lVert \e^{- s \Delta_{q,r}}-e^{- t_0 \Delta_{q,r}} \rVert_{Fr}$
				\item The difference of the traces between the operators, $D(r) = Tr(e^{- s \Delta_{q,r}}-e^{- t_0 \Delta_{q,r}})$
			\end{enumerate}}
		\EndFor
		\State{Define $\hat{r} \assign \text{argmax } D(r)$.}
		\State{Compute the kernel of $\Delta_{\hat{r},q}$}
		\State{The estimated $q$-th Betti number is now $\hat{\beta}_q=\dim \ker \Delta_{q,\hat{r}}$}
	\end{algorithmic}
\end{algorithm}

We summarize the results of our experiments in Tables \ref{table:circle}-\ref{table:nonu_two_circles}, and in Figures \ref{fig:1circle}-\ref{fig:2circles_nonu} we present typical examples of the behavior of the quantities used to choose the estimators. Further experiments are forthcoming, pending the completion of ongoing improvements in our implementation. We tested the above algorithms in three experiments:
\begin{enumerate}
	\item Data Set: Fifty points sampled from a uniform distribution on a unit circle embedded in $\R^2$, corrupted with different amounts of Gaussian noise. Goal: Estimate the first Betti number (correct 
	value $= 1$). Repeated $50$ times.
	\item Data Set: Fifty points sampled from a uniform distribution on a two circles of radius $0.5$ embedded in $\R^3$. Goal: Estimate
	the first Betti number (correct value $= 2$). Repeated $50$ times.
	\item Data Set: Fifty points sampled from a non-uniform distribution on a two circles of radius $0.5$ embedded in $\R^3$. Goal: Estimate
	the second Betti number (correct value $= 2$). Repeated $10$ times.
\end{enumerate}
The algorithms were run $50$ times on the first two experiements and $10$ times in the second experiment. All three algorithms were run on the same data. As seen in the tables, the performance was very good for for both the methods using the relative von Neumann entropy and the difference of the traces as the comparison criteria, despite the small size of the data sets. The method using the Frobenius norm was not as successful.  Future experiments will include different numbers of points, more trials, and different shapes.

\begin{table}[h]
\begin{center}
\begin{tabular}{|c|c|c|c|c|c|}
\hline
\bf{Method $| \beta_{1}$} & \bf{0} & \bf{1} &  \bf{$>$1} & 
Total \# Trials & Percent correct\\ \hline

\bf{Relative Entropy} & 0 & 50 & 0 & 50 & 100\%\\ \hline
\bf{Hilbert-Schmidt Metric} & 12 & 38 & 0 & 50 & 76\%\\ \hline
\bf{Difference of Traces} & 2 & 48 & 0 & 50 & 96\% \\ \hline
\end{tabular}
\end{center}
\caption{Results of experiments using 50 sample points from a uniform distribution on a unit circle with Gaussian noise, $\sigma=0.01$. Repeated 50 times. Correct value: $\beta_1=1$} 
\label{table:circle}
\end{table}

\begin{figure}[h!]

\centering
\begin{subfigure}[]{0.4\linewidth}
\includegraphics[width=\linewidth]{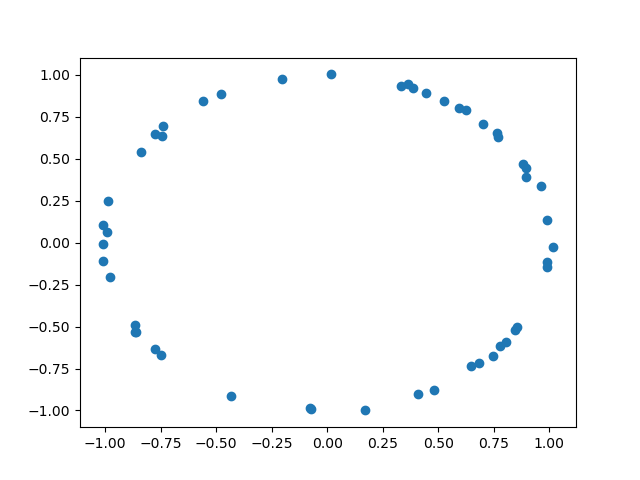}
\end{subfigure}
\begin{subfigure}[]{0.45\linewidth}
\includegraphics[width=\linewidth]{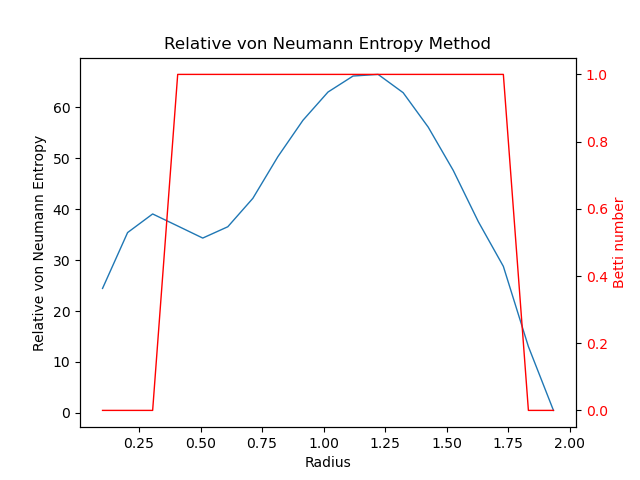}
\end{subfigure}
\begin{subfigure}[]{0.4\linewidth}
\includegraphics[width=\linewidth]{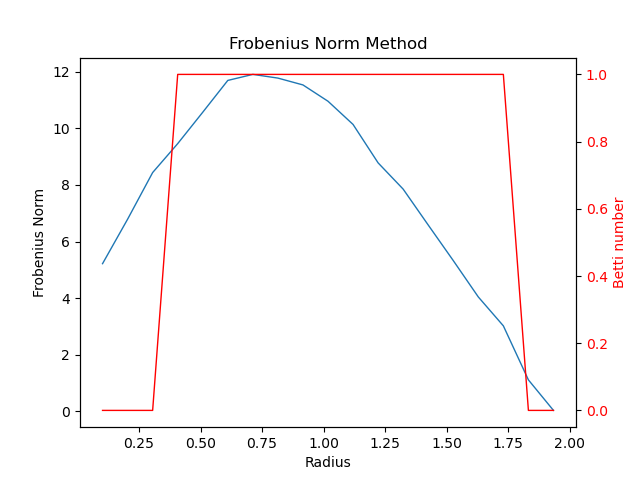}
\end{subfigure}
\begin{subfigure}[]{0.45\linewidth}
\includegraphics[width=\linewidth]{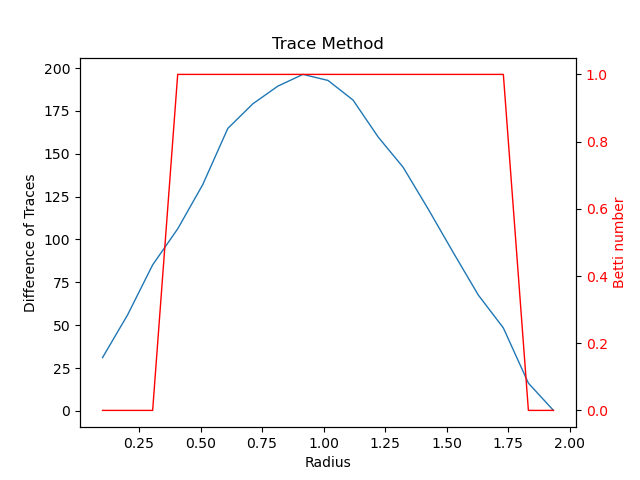}
\end{subfigure}
\caption{An example showing the typical values of the quantities used
for the combinatorial Hodge-Laplacian selection in the experiment with points uniformly sampled from a single circle.}	
\label{fig:1circle}
\end{figure}

\begin{table}[h]
\begin{center}

\begin{tabular}{|c|c|c|c|c|c|c|}
\hline
\bf{Method $| \beta_{1}$} & \bf{0} & \bf{1} & \bf{2} &  \bf{$>$2} & Total \# Trials & Percent Correct \\ \hline

\bf{Relative Entropy} & 0 & 1 & 49 & 0 & 50 & 98\% \\ \hline
\bf{Frobenius Norm} & 8 & 52 & 40 & 0 & 50 & 40\% \\ \hline
\bf{Trace} & 1 & 0 & 49 & 0 & 50 & 98\% \\ \hline
\end{tabular}
\end{center}
\caption{Results of experiments using 50 sample points from a uniform distribution on two circle with radius $=0.5$. Repeated 50 times. Correct value: $\beta_1=2$} 	
\label{table:two_circles}
\end{table}

\begin{figure}[h!]

\centering
\begin{subfigure}[]{0.4\linewidth}
\includegraphics[width=\linewidth]{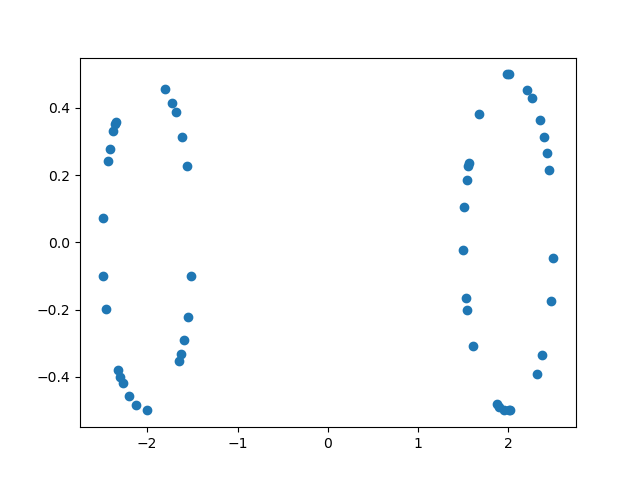}
\end{subfigure}
\begin{subfigure}[]{0.45\linewidth}
\includegraphics[width=\linewidth]{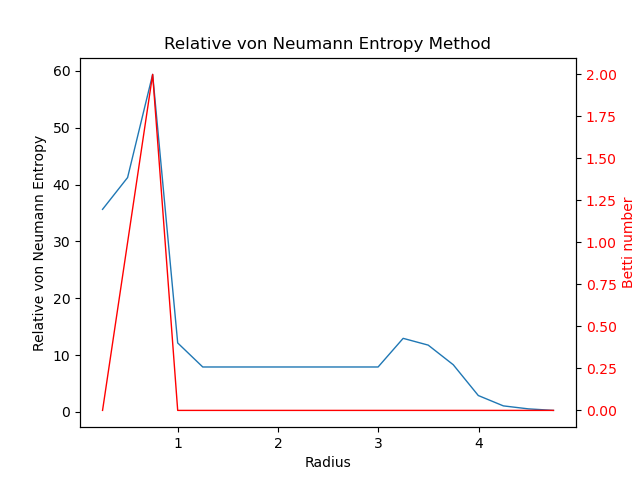}
\end{subfigure}
\begin{subfigure}[]{0.4\linewidth}
\includegraphics[width=\linewidth]{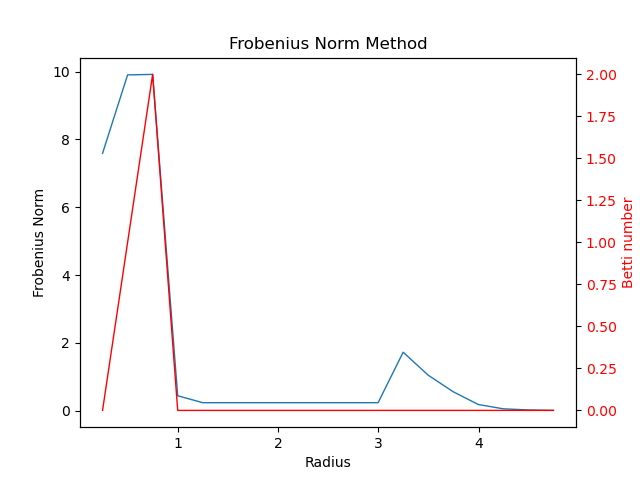}
\end{subfigure}
\begin{subfigure}[]{0.45\linewidth}
\includegraphics[width=\linewidth]{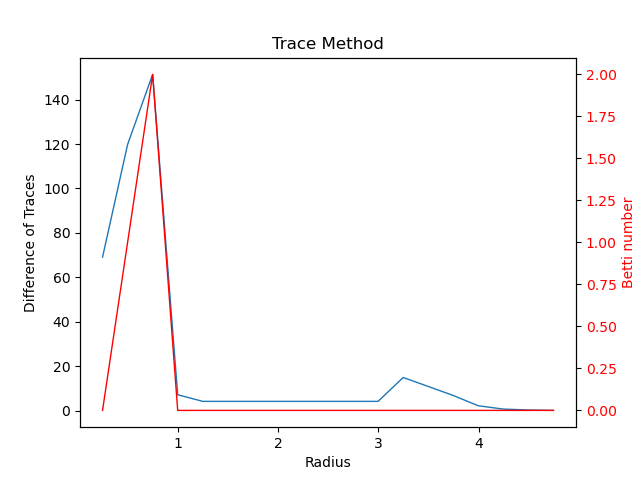}
\end{subfigure}
\caption{An example showing the typical values of the quantities used
for the combinatorial Hodge-Laplacian selection in the experiment with points uniformly sampled from a two circles embedded in 
$\R^3$. (The circles appear distorted due to the viewing angle of the image.) }	
\label{fig:2circles}
\end{figure}

\begin{table}[h]
\begin{center}

\begin{tabular}{|c|c|c|c|c|c|c|}
\hline
\bf{Method $| \beta_{1}$} & \bf{0} & \bf{1} & \bf{2} &  \bf{$>$2} & Total \# Trials & Percent Correct \\ \hline

\bf{Relative Entropy} & 0 & 23 & 27 & 0 & 50 & 54\% \\ \hline
\bf{Frobenius Norm} & 0 & 44 & 6 & 0 & 50 & 12\% \\ \hline
\bf{Trace} & 0 & 36 & 14 & 0 & 50 & 38\% \\ \hline
\end{tabular}
\end{center}
\caption{Results of the experiments using 50 sample points from a non-uniform distribution on two circles. (Correct value $\beta_1=2$). Repeated 10 times. All methods performed worse compared to the case of sampling from a uniform distribution.} 	
\label{table:nonu_two_circles}
\end{table}

\begin{figure}[h!]

\centering
\begin{subfigure}[]{0.4\linewidth}
\includegraphics[width=\linewidth]{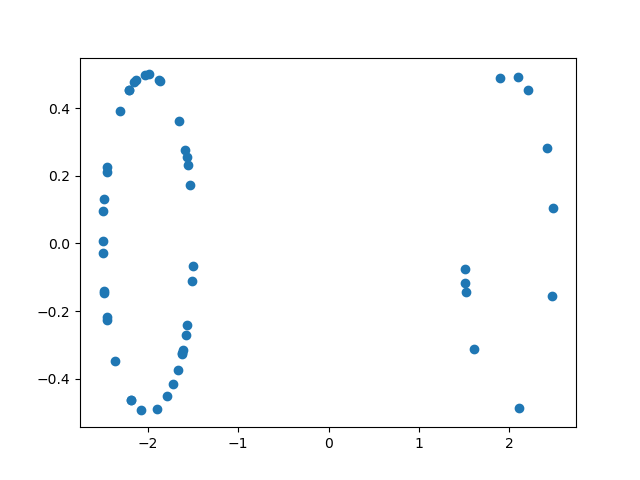}
\end{subfigure}
\begin{subfigure}[]{0.45\linewidth}
\includegraphics[width=\linewidth]{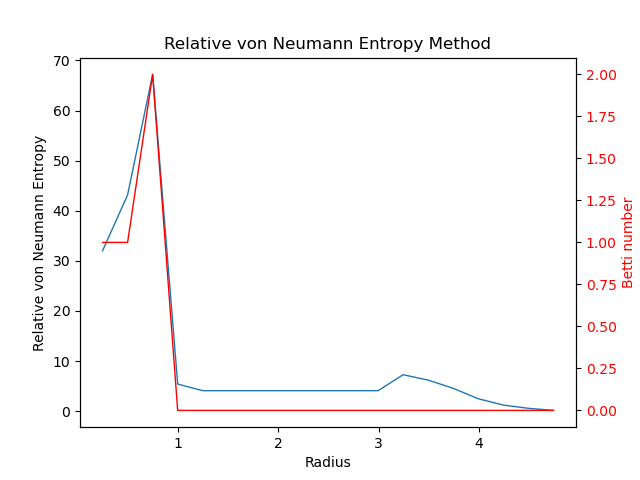}
\end{subfigure}
\begin{subfigure}[]{0.4\linewidth}
\includegraphics[width=\linewidth]{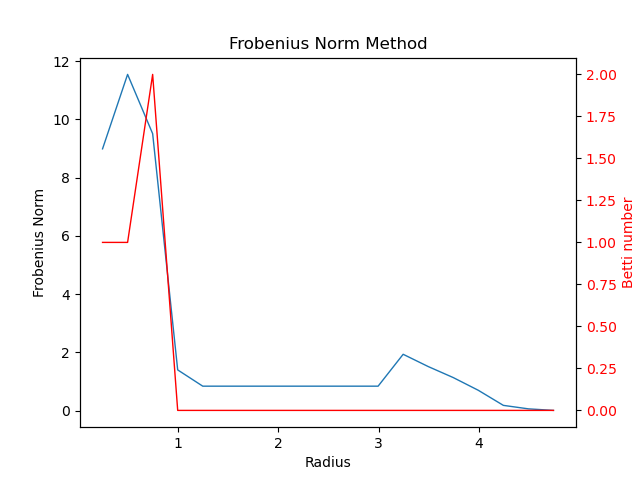}
\end{subfigure}
\begin{subfigure}[]{0.45\linewidth}
\includegraphics[width=\linewidth]{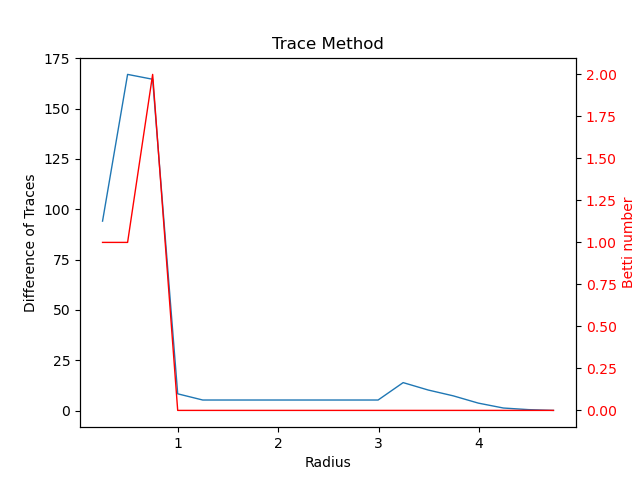}
\end{subfigure}
\caption{An example showing the typical values of the quantities used
for the combinatorial Hodge-Laplacian selection in the experiment with points non-uniformly sampled from a two circles embedded in 
$\R^3$. (The circles appear distorted due to the viewing angle of the image.) }	
\label{fig:2circles_nonu}
\end{figure}

\section{Discussion and Future Work}\label{sec:Discussion}

We have presented and compared three closely related algorithms for estimating the real cohomology groups of a 
metric-measure space from sample points, and confirmed that this procedure produces accurate results in a number of
synthetic example, sometimes corrupted with small amounts of Gaussian noise. We believe this to be the first demonstration of direct, fully data-driven estimation of the Betti number of a metric-measure space from point samples. Our methods also represent a significant departure from the usual techniques in topological data analysis, as they convert the topological problem of Betti number estimation into the analytic problem of estimating operator semigroups defined on the metric-measure spaces and then proceed to treat the problem analytically. 

The algorithms introduced here open many interesting questions about the exactly estimation of topological invariants of metric-measure
spaces from sample points, not the least of which is whether the procedures
given here stabilize with high probability to the true Betti numbers of $(X,
d, \mu)$ as the number of samples increases, as well as whether the choice of 
Hodge-Laplacians given by our procedure generates a sequences of semigroups whose spectra 
converge to the spectra of the corresponding heat semigroup on the metric-measure space as the number of points increases. We 
conjecture that both of these statements are true, at least when all of the relevant objects are defined and the sampling is from a uniform distribution, for instance when the metric-measure space is a manifold with uniform distribution. As mentioned in the Introduction, a current shortcoming of the algorithms is that they appear to be heavily sensitive to the distribution from which the points are sampled, and they cannot be expected to give sensible results if this distribution is not close to being uniform. It will
be important in future work to extend these methods to work with more general classes of distributions. These algorithms also raise important questions concerning both their non-asymptotic behavior, providing additional directions for future research. Finally, we hope that the work presented here will further motivate the development of statistical techniques in
noncommutative settings, which, among other applications, could result in new classes of methods for
the estimation of geometric and topological invariants of a space from finite sets of sample points.
\printbibliography

\end{document}